\newtheorem{theorem}{Theorem}[section]
\newtheorem{lemma}[theorem]{Lemma}
\def\BibTeX{{\rm B\kern-.05em{\sc i\kern-.025em b}\kern-.08em
    T\kern-.1667em\lower.7ex\hbox{E}\kern-.125emX}}
\begin{document}

\title{Near-Field Beamfocusing with Polarized Antennas\\
\thanks{This work is supported by the grant from the Spanish ministry of economic affairs and digital transformation and of the European union – NextGenerationEU UNICO-5G I+D/AROMA3D-Earth (TSI-063000-2021-69), by Grant 2021 SGR 00772 funded by the Universities and Research Department from Generalitat de Catalunya, and by the Spanish Government through
the project 6G AI-native Air Interface (6G-AINA, PID2021-128373OB-I00
funded by MCIN/AEI/10.13039/501100011033) and by ”ERDF A way of
making Europe}
}
\author{\IEEEauthorblockN{Adrian Agustin, Xavier Mestre }
\IEEEauthorblockA{\textit{Information and Signal Processing for Intelligent Communications (ISPIC) Research Unit }\\
\textit{Centre Tecnològic de Telecomunicacions de Catalunya (CTTC/iCERCA)}\\
\texttt{\{adrian.agustin, xavier.mestre\}@cttc.cat}
}
}

\maketitle

\begin{abstract}
One of the most relevant challenges in future 6G wireless networks is how to support a massive spatial multiplexing of a large number of user terminals. Recently, extremely large antenna arrays (ELAAs), also referred to as extra-large MIMO (XL-MIMO), have emerged as an potential enabler of this type of spatially multiplexed transmission. These massive configurations substantially increase the number of available spatial degrees of freedom (transmission modes) while also enabling to spatially focus the transmitted energy into a very small region, thanks to the properties of \textit{near-field} propagation and the large number of transmitters.  This work explores whether multiplexing of multiple orthogonal polarizations can enhance the system performance in the \textit{near-field}. We concentrate on a simple scenario consisting of a Uniform Linear Array (ULA) and a single antenna element user equipment (UE). We demonstrate that the number of spatial degrees of freedom can be as large as 3 in the \textit{near-field} of a Line of Sight (LoS) channel when both transmitter and receiver employ three orthogonal linear polarizations. In the \textit{far-field}, however, the maximum number of spatial degrees of freedom tends to be only 2, due to the fact that the equivalent MIMO channel becomes rank deficient. We provide an analytical approximation to the achievable rate, which allows us to derive approximations to the optimal antenna spacing and array size that maximize the achievable rate.
\end{abstract}

\begin{IEEEkeywords}
XL-MIMO, ELAA, near-field communications, polarized multi-antenna communications.
\end{IEEEkeywords}

\section{Introduction}
Conventional wireless communication designs have traditionally assumed that radio waves are received in the \textit{far-field} region, where the propagating wave front is essentially a plane. However, a massive deployment of spatially distributed antennas challenges this regime, mainly because the user equipment (UE) can no longer associate all the transmitters with the same direction of arrival. On the other hand, as wireless systems tend to operate in higher frequency bands, conventional channel models based on rich scattering considerations become less meaningful, and instead line of sight (LoS) channel models are much more accurate and therefore relevant.

In this type of scenario, one can still achieve spatial multiplexing by deploying multiple antennas at both transmitter and receiver (MIMO). In fact, it was shown in \cite{Torkildson2011} that, if both transmitter and receiver are equipped with a uniform linear array (ULA) of $K$ antennas with an inter-element separation $\Delta_T$ and $\Delta_R$ at the transmitter and receiver respectively, one can still achieve $K$ spatial degrees of spatial multiplexing in the LoS channel, provided that the inter-element distances are chosen according to the \textit{Rayleigh spacing} criterion, namely $\Delta_T\Delta_R=\frac{D\lambda}{K}$ where $D$ is the separation between transmit and receive arrays. The benefits from using MIMO-surfaces or MIMO antenna arrays are investigated in \cite{direnzo2023}, and an upper bound on the capacity of LoS MIMO channel is provided in \cite{Do2021} which is attained with uniform linear arrays (ULA) by means of physically rotating the ULA depending on the signal to noise ratio (SNR). 
With the advent of Extremely Large Antenna Arrays (ELAA), also referred to in the literature as eXtra Large (XL)-MIMO, the \textit{far-field} assumption can hardly be met, and the propagation wave front becomes spherical. The \textit{near-field} paradigm offers new opportunities to exploit the wireless medium that have not been considered in conventional wireless system designs. For example, in the \textit{near-field} region it is possible to use spatial filtering to focus energy on a compact region (i.e. \textit{beamfocusing}) rather than steered it into a specific direction \cite{Bjornson19,ramezani2023}. This property  enables large throughput gains in wireless networks by drastically reducing interference among terminals, see \cite{wang2023} to review the latest advances on this front.

\begin{figure}[htbp]
\centerline{\includegraphics[width=3.in, clip=true, trim={0 3 0 3}]{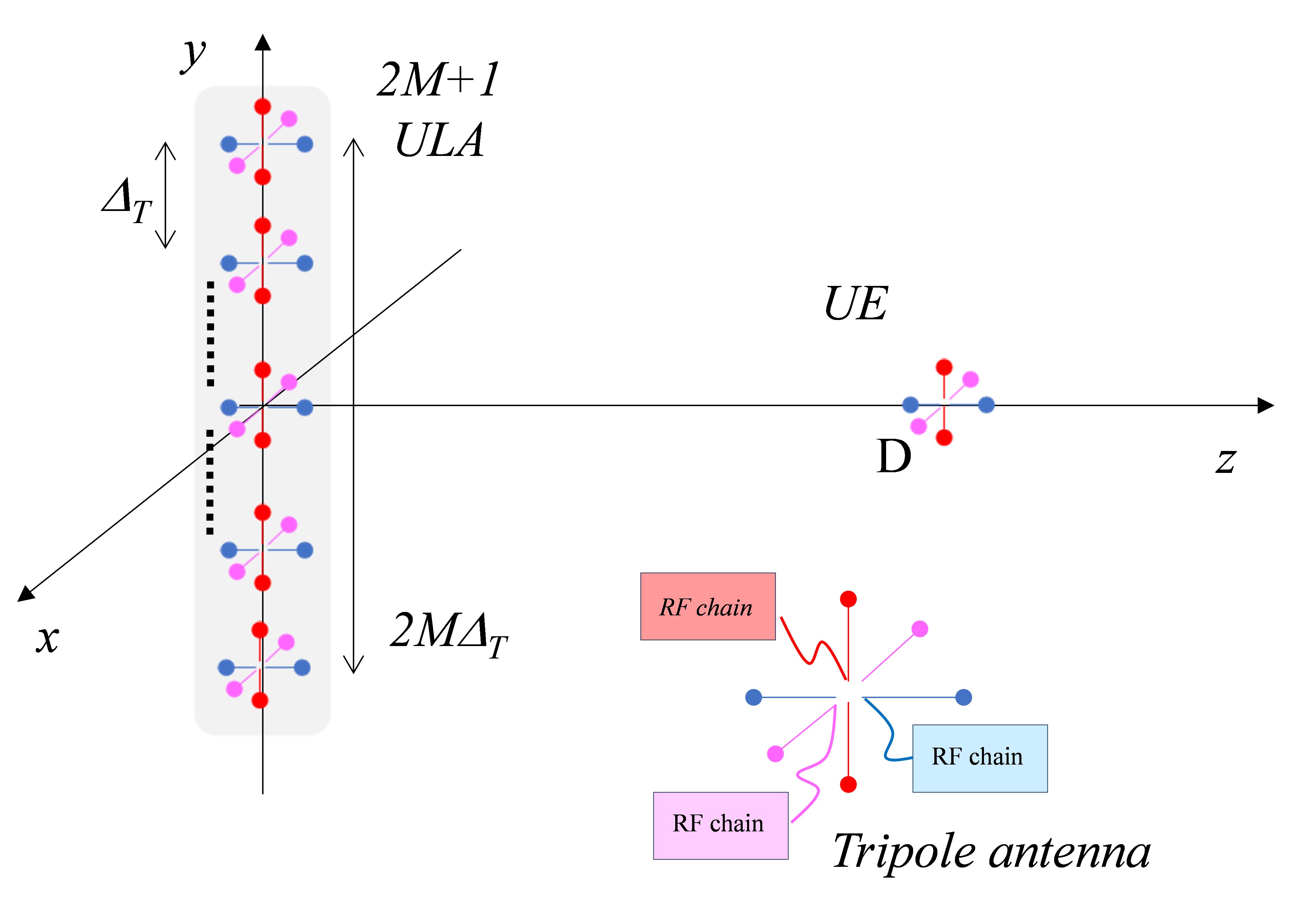}}
\caption{Scenario configuration. Transmitter is based on a ULA with $2M+1$ antenna elements. These elements can be based on: 1 dipole, 2 dipoles or 3 dipoles (in the figure). The UE has a single antenna element.}
\label{fig:Scenario}
\end{figure}

An interesting possibility to increase the spectral efficiency of the ELAA configuration without increasing the total transmission area is to employ multiple polarizations \cite{Chen21,Sena21,Wei23}. The analysis of the electromagnetic propagation phenomena reveals that inter-polarization interference may significantly affect the behavior of the resulting XL-MIMO channel. The attained spectral efficiency in uplink of this type of polarized XL-MIMO architecture is addressed in \cite{Torres2020near}, elucidating that the polarization mismatch can significantly degrade the spectral efficiency.
This paper aims to investigate the use of additional polarizations in a ULA-based XL-MIMO system, see Fig. \ref{fig:Scenario}, when the communication is carried out in the \textit{near-field} region, revisiting the results of \cite{direnzo2023},\cite{Do2021} derived for the \textit{far-field}. Different from \cite{Wei23,Torres2020near}, we will keep the essential characteristics of the electromagnetic propagation conditions, leading us to an increase of the spatial transmission modes as a function of employed polarizations. To simplify the analysis, we will concentrate on the performance of antenna arrays composed of three orthogonal infinitesimal dipoles, for which the total dimension of the antenna is much lower than the operative wavelength. 

The main contributions are as follows. First, we demonstrate that the total spatial degrees of freedom for spatial multiplexing is 3, assuming that the ELAA has at least three elements and the UE is located in the \textit{near-field} receiving with three orthogonal dipoles. We provide an approximation for the achievable rate at high SNR and the optimal antenna spacing, see Lemma \ref{lm:Capacity_highSNR}. Second, we show that, when both UE and ELAA elements consist of 2 orthogonal dipoles only, the maximum achievable rate is obtained when the distance between array elements $\Delta_T$ converges to zero. Consequently, enlarging the ULA will lead to a degradation of the achievable rate for a given UE, but will enhance its focusing capability, i.e. reducing the generated interference to unintended UEs. On the other hand, the use of three orthogonal polarizations will lead to gains in both domains, i.e. in terms of achievable rate and focusing capability. Finally, we reason that when the UE is placed in the \textit{far-field} region, the use of three orthogonal polarizations does provide any benefits compared with the conventional 2-dipole configuration, because the channel matrix tends to be rank-deficient with rank 2, see Lemma \ref{lm:eigenvalues_FF}.

\section{System Model}


We consider a single-user XL-MIMO scenario as it is depicted in in Fig. \ref{fig:Scenario}. The transmitter is a XL-ULA equipped with $2M+1$ elements at $(0,0,0)$, each element consisting of up to 3 orthogonal infinitesimal dipoles. The antenna separation between antennas is denoted by $\Delta_{T}$. A single element UE (which can also consist of up to 3 orthogonal polarizations) is placed at coordinates $(0,0,D)$, where $D$ is the distance between the UE and the center of the array. Notice that each dipole is assumed to be fed by a different radio frequency (RF) unit, allowing for the possibility of multiplexing in the polarization domain. 

Let us denote by $\mathbf{r}_m$ the position vector of the $m$th element of the ULA, $m=-M,\ldots,M$, as seen from the UE, that is 
\begin{IEEEeqnarray}{c}
{\mathbf{r}}_{m} = 
\begin{pmatrix}
0\\
m\Delta_{T}\\
0
\end{pmatrix} -
\begin{pmatrix}
0\\
0\\
D\\
\end{pmatrix}
\label{eq:distance}
\end{IEEEeqnarray}
so that, in particular, $r_m = \|\mathbf{r}_m\|$ is the distance between the $m$th element of the ULA and the UE and $\hat{\mathbf{r}}_{m} = \mathbf{r}_{m}/{r_{m}}$ is the unit norm propagation vector. 

Consider here the transmission from the ULA to the UE. The electric field at the receiver generated by the $m$th transmitting element can be described though the Green function in classical electromagnetic theory, see \cite{Poon05}. The equivalent channel matrix under the assumption of infinitesimal dipole antennas is therefore given by 
\begin{IEEEeqnarray}{c}
\mathbf{H}_{m} = \mathrm{j} \frac{\eta}{2\lambda}\frac{\exp\left({-\mathrm{j}\frac{2\pi}{\lambda }r_{m}}\right)}{r_{m}}\mathbf{P}_{m}\\
\mathbf{P}_{m} = \mathbf{I} - \hat{\mathbf{r}}_{m}\hat{\mathbf{r}}_{m}^H + \mathbf{\Psi}_{m} \\ 
\mathbf{\Psi}_{m} = \frac{\lambda (\mathrm{j} 2\pi r_{m}- \lambda)}{(2\pi r_{m})^2 }\left(\mathbf{I} - 3\hat{\mathbf{r}}_{m}\hat{\mathbf{r}}_{m}^H\right) 
\label{eq:channel_matrix_dipole}
\end{IEEEeqnarray}
where $\eta$ is the permittivity constant of the propagation medium and $\lambda$ is the transmitted wavelength. 
Let us remark that the magnitude of the elements of the matrix $\mathbf{\Psi}_{m}$, which contains the reactive terms of the radiated field, decay to zero as the inverse of the propagation distance. This means that in practice one can safely approximate $\mathbf{\Psi}_{m} \approx \mathbf{0}$ even for short distances, so that matrix $\mathbf{P}_{m}\in \mathbb{C}^{3\times3}$, tends to be an orthogonal projection matrix to the direction defined by $\hat{\mathbf{r}}_{m}$.  

The channel model above can also be particularized to the case where only some of the polarizations are used at either the transmitter or the receiver. More specifically, if the ULA and UE only use the first $t_\mathrm{pol}$ and $r_\mathrm{pol}$ polarizations respectively, the equivalent channel matrix associated with the $m$th transmitter element can be written as $\bar{\mathbf{H}}_m =\mathbf{H}_{m}\left(1:r_\mathrm{pol},1:t_\mathrm{pol}\right) \in \mathbb{C}^{r_\mathrm{pol} \times t_\mathrm{pol}}$. 
Finally, the equivalent channel matrix for the XL-MIMO ULA with $M_t=2M+1$ elements can be expressed as 
$\mathbf{H}_\mathrm{eq} = \left[ \bar{\mathbf{H}}_{-M},\ldots,\bar{\mathbf{H}}_{M}\right] \in \mathbb{C}^{r_\mathrm{pol} \times t_\mathrm{pol} M_t}$. 

Let $\mathbf{x}\in\mathbb{C}^{n_s \times 1}$ denote the transmitted signal at a certain time instant, where $n_s$ is the total number of symbol streams that are spatially multiplexed. Assuming that a certain precoding matrix $\mathbf{F}\in\mathbb{C}^{t_{\mathrm{pol}}M_t\times n_s}$ is applied, the received signal can be expressed as 
\begin{IEEEeqnarray}{c}
\mathbf{y} =\sqrt{\rho}\mathbf{H}_{eq}\mathbf{F}\mathbf{x} + \mathbf{w}
\label{eq:signal_model}
\end{IEEEeqnarray}
where $\mathbf{y} \in \mathbb{C}^{r_\mathrm{pol}\times1}$ denotes the received signal, $\rho$ is the signal to noise ratio and $\mathbf{w} \in \mathbb{C}^{r_\mathrm{pol} \times 1}$ represents the Additive White Gaussian Noise (AWGN) vector, modeled as a standard circularly symmetric normal vector.

\section{Rank of the channel}
\label{sec:rank_channel}
The maximum number of total achievable spatial transmission modes (spatial degrees of freedom) is connected with the rank of the channel matrix, which itself is determined by the number of non-zero singular values. To investigate this point, we consider the eigenvalues of the Gramian of the channel matrix, namely $ \mathbf{H}_{eq}\mathbf{H}_{eq}^H= \mathbf{W} $, where $\mathbf{W} \in \mathbb{C}^{r_\mathrm{pol} \times r_\mathrm{pol}}$ is defined as
\begin{IEEEeqnarray}{c}
\mathbf{W} = \left(\frac{\eta}{2\lambda}\right)^2 
\sum_{m=-M}^M{\!\! \frac{\mathbf{P}_{m}\mathbf{P}_{m}^H}{r_m^2}}.
\label{eq:W_matrix}
\end{IEEEeqnarray}
If transmitter and receiver are equipped with tripole antennas ($t_\mathrm{pol}=r_\mathrm{pol}=3$), matrix $\mathbf{W}$ becomes 
\begin{IEEEeqnarray}{c}
\mathbf{W}^{3\times3} =\left(\frac{\eta}{2\lambda}\right)^2
\begin{pmatrix}
D^2\beta_0\!\! +\!\! \Delta_{T}^2\beta_1\!\! & 0 \!\!&\!\! 0 \\
0  & D^2\beta_0\!\!\!\! &\!\! 0 \\
0\!\! & 0 \!\!& \!\!\Delta_{T}^2\beta_1 \\
\end{pmatrix}
\label{eq:HH_ULA_tripole_tripole}
\end{IEEEeqnarray}
where we have defined
\begin{IEEEeqnarray}{c}
\!\!\beta_0=\!\!\!\!\!\!\sum_{m=\!-M}^{M}{\!\frac{1}{r_{m}^4}}=\!\!\!\!\!\!\sum_{m=\!-M}^{M}{\frac{1}{\left(D^2+  m^2\Delta_{T}^2\right)^2}}
\label{eq:beta_0}\\
\!\!\beta_1=\!\!\!\!\!\!\sum_{m=\!-M}^{M}{\frac{m^2}{r_{m}^4}}=\!\!\!\!\!\!\sum_{m=\!-M}^{M}{\frac{m^2}{\left(D^2 + m^2\Delta_{T}^2\right)^2}}.
\label{eq:beta_1}
\end{IEEEeqnarray}


\begin{lemma}
\label{lm:tripole_tripole}
The eigenvalues of $\mathbf{W}^{3\times3}$ in  (\ref{eq:HH_ULA_tripole_tripole}) can be approximated as a function of $\epsilon\!=\!\frac{M\Delta_T}{D}$ by
\begin{IEEEeqnarray}{c}
\lambda _{1}^{3 \times 3} (\epsilon) = \left(\frac{\eta}{2\lambda}\right)^2\frac{2M}{ D^2}\frac{\arctan\left(\epsilon\right)}{\epsilon} \!\!\!\! \\
\lambda _{2}^{3 \times 3} (\epsilon) = \left(\frac{\eta}{2\lambda}\right)^2\frac{M}{D^2}\left(\!\frac{\arctan\left(\epsilon\right)}{\epsilon} + \frac{1}{\epsilon^2+1}\right)\!\! \\
\lambda _{3}^{3 \times 3} (\epsilon) = \left(\frac{\eta}{2\lambda}\right)^2\frac{M}{D^2}\left(\!\frac{\arctan\left(\epsilon\right)}{\epsilon} - \frac{1}{\epsilon^2+1}\right)
\end{IEEEeqnarray}
where the approximation becomes exact as $M\rightarrow\infty$, $\Delta_T\rightarrow 0$ while $M\Delta_T$ converging to a constant. 
\end{lemma}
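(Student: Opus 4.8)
The plan is to exploit that $\mathbf{W}^{3\times3}$ in (\ref{eq:HH_ULA_tripole_tripole}) is already diagonal, so no diagonalization is needed: its eigenvalues are exactly its diagonal entries. Writing $c=\left(\frac{\eta}{2\lambda}\right)^2$, these are $c(D^2\beta_0+\Delta_T^2\beta_1)$, $c\,D^2\beta_0$ and $c\,\Delta_T^2\beta_1$, which must be matched to $\lambda_1^{3\times3}$, $\lambda_2^{3\times3}$ and $\lambda_3^{3\times3}$ respectively. The whole problem thus collapses to finding sharp approximations of the two lattice sums $\beta_0$ and $\beta_1$ of (\ref{eq:beta_0})--(\ref{eq:beta_1}) in the regime $M\to\infty$, $\Delta_T\to 0$ with $\epsilon=M\Delta_T/D$ held fixed.

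First I would factor out the distance $D$ and introduce the normalized abscissa $x_m=m\Delta_T/D$, which ranges over the fixed interval $[-\epsilon,\epsilon]$ with uniform spacing $\delta=\Delta_T/D$. Since $D^2+m^2\Delta_T^2=D^2(1+x_m^2)$ and $m^2=x_m^2D^2/\Delta_T^2$, the two scaled quantities become genuine Riemann sums,
\begin{IEEEeqnarray}{c}
D^2\beta_0=\frac{1}{D\Delta_T}\sum_{m=-M}^{M}\frac{\delta}{(1+x_m^2)^2},\quad
\Delta_T^2\beta_1=\frac{1}{D\Delta_T}\sum_{m=-M}^{M}\frac{\delta\,x_m^2}{(1+x_m^2)^2}.
\end{IEEEeqnarray}
As the mesh $\delta\to 0$ with the endpoints $\pm\epsilon$ fixed, each bracketed sum converges to the corresponding integral over $[-\epsilon,\epsilon]$.

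Next I would evaluate the two elementary integrals using $\int(1+x^2)^{-2}dx=\tfrac12\big(\tfrac{x}{1+x^2}+\arctan x\big)$ together with $\int(1+x^2)^{-1}dx=\arctan x$; by symmetry of the limits,
\begin{IEEEeqnarray}{c}
\int_{-\epsilon}^{\epsilon}\frac{dx}{(1+x^2)^2}=\arctan\epsilon+\frac{\epsilon}{1+\epsilon^2},\quad
\int_{-\epsilon}^{\epsilon}\frac{x^2\,dx}{(1+x^2)^2}=\arctan\epsilon-\frac{\epsilon}{1+\epsilon^2}.
\end{IEEEeqnarray}
Substituting $\frac{1}{D\Delta_T}=\frac{M}{\epsilon D^2}$ (from $\Delta_T=\epsilon D/M$) and dividing the bracketed terms through by $\epsilon$ immediately yields $c\,D^2\beta_0\approx\lambda_2^{3\times3}$ and $c\,\Delta_T^2\beta_1\approx\lambda_3^{3\times3}$; adding these two cancels the $\tfrac{1}{1+\epsilon^2}$ contributions and reproduces $\lambda_1^{3\times3}=c\,\tfrac{2M}{D^2}\tfrac{\arctan\epsilon}{\epsilon}$.

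The delicate point I would write out most carefully is the meaning and justification of the limit. The bare sums $\beta_0,\beta_1$ diverge as $\Delta_T\to0$, and even the scaled entries $D^2\beta_0,\Delta_T^2\beta_1$ grow like $M$; hence ``exact as $M\to\infty$'' must be understood as the vanishing of the \emph{relative} error between each Riemann sum and its integral. Because the two integrands are smooth with bounded derivatives on the fixed compact interval $[-\epsilon,\epsilon]$, the standard Riemann-sum error estimate is $O(\delta)=O(\Delta_T/D)$ against an $O(1)$ integral, which indeed tends to zero in the prescribed regime; this is the only step requiring genuine care, the remainder being bookkeeping.
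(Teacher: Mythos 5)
Your proposal is correct and follows essentially the same route as the paper's own proof: both reduce the diagonal entries of $\mathbf{W}^{3\times3}$ to the sums $\beta_0,\beta_1$, approximate these by definite integrals (you normalize to $[-\epsilon,\epsilon]$ with mesh $\Delta_T/D$, the paper integrates over $(-M,M)$ with $\Delta_T=D/\omega$ --- an equivalent change of variables), and evaluate the resulting $\arctan$ integrals. Your additional remarks on the $O(\Delta_T/D)$ Riemann-sum error and the need to interpret exactness via relative error are a welcome refinement that the paper leaves implicit.
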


\begin{proof}
Assuming that $\Delta_{T} = \frac{D}{\omega}$, and $M$ is large enough, the partial sums introduced in (\ref{eq:beta_0})-(\ref{eq:beta_1}) can be approximated as a definite integral over $(-\!M,M)$, which can be derived as
\begin{IEEEeqnarray}{c}
\!\!\beta_0(\omega) \approx\frac{\omega}{D^4\!\!}\left(\arctan\left(\frac{M}{\omega}\right) + \frac{M\omega}{M^2+\omega^2}\right)\\
\!\!\beta_1(\omega) \approx \frac{\omega^3}{D^4\!}\left(\arctan\left(\frac{M}{\omega}\right) - \frac{M\omega}{M^2+\omega^2}\right).
\end{IEEEeqnarray}
We obtain the eigenvalues with the operations defined in the diagonal elements of (\ref{eq:HH_ULA_tripole_tripole}) and using $\epsilon\!=\!\frac{M}{\omega}\!=\!\frac{M\Delta_T}{D}$.
\end{proof}
When the elements at the transmit ULA consist of 2 orthogonal dipoles (in the $x$ and $y$ directions), but the receive UE implements three dipoles, $(r_\mathrm{pol},t_\mathrm{pol})=(3,2)$ the dimensions of the channel matrix differs from the previous case, so that 
\begin{IEEEeqnarray}{c}
\mathbf{W}^{3\times2}=\left(\frac{\eta}{2\lambda}\right)^2
\begin{pmatrix}
\beta_2\!\!  & 0 \!\!&\!\!0 \\
0  & D^4\beta_3\!\!\!\! &\!\! 0 \\
0 & 0 \!\!&  D^2\Delta_{T}^2\beta_3\\
\end{pmatrix}
\label{eq:HH_ULA_tripole_dipole}
\end{IEEEeqnarray}
where
\begin{IEEEeqnarray}{c}
\beta_2=\!\sum_{m=\!-M}^{M}{\!\!\frac{1}{r_{m}^2}}=\!\!\!\sum_{m=\!-M}^{M}{\frac{1}{\left(D^2+  m^2\Delta_{T}^2\right)}}
\label{eq:beta_2}\\
\!\!\beta_3=\!\!\!\sum_{m=\!-M}^{M}{\frac{1}{r_{m}^6}}=\!\!\!\!\!\!\sum_{m=\!-M}^{M}{\frac{1}{\left(D^2 + m^2\Delta_{T}^2\right)^3}}
\label{eq:beta_3} \\
\!\!\beta_4=\!\!\!\sum_{m=\!-M}^{M}{\frac{m^2}{r_{m}^6}}=\!\!\!\!\!\!\sum_{m=\!-M}^{M}{\frac{m^2}{\left(D^2 + m^2\Delta_{T}^2\right)^3}}
\label{eq:beta_4}
\end{IEEEeqnarray}

\begin{lemma}
\label{lm:tripole_dipole}
Under the same conditions as Lemma \ref{lm:tripole_tripole}, the eigenvalues of $\mathbf{W}^{3\times 2}$ can be approximated by
\begin{IEEEeqnarray}{c}
\lambda _{1}^{3 \times 2} (\epsilon) = \left(\frac{\eta}{2\lambda}\right)^2\frac{2M}{ D^2}\frac{\arctan\left(\epsilon\right)}{\epsilon} \!\!\!\! \\
\lambda _{2}^{3 \times 2} (\epsilon)\!\! = \left(\frac{\eta}{2\lambda}\right)^2\frac{M}{ 4D^2}\left(\frac{3\arctan\left(\epsilon\right)}{\epsilon} +\frac{3\epsilon^2+5}{\left(\epsilon^2+1\right)^2}\right)\!\! \\
\lambda _{3}^{3 \times 2} (\epsilon)\!\! = \left(\frac{\eta}{2\lambda}\right)^2\frac{M}{ 4D^2}\left(\frac{\arctan\left(\epsilon\right)}{\epsilon} +\frac{\epsilon^2-1}{\left(\epsilon^2+1\right)^2}\right)\!\! 
\end{IEEEeqnarray}
\end{lemma}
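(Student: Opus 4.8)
The plan is to follow the same strategy as in the proof of Lemma~\ref{lm:tripole_tripole}: exploit the diagonal structure of $\mathbf{W}^{3\times2}$ so that the eigenvalues are read off directly from its diagonal entries, and then replace each defining partial sum by a definite integral over $(-M,M)$ that becomes exact in the regime $M\to\infty$, $\Delta_T\to0$ with $M\Delta_T$ fixed.

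First I would record why $\mathbf{W}^{3\times2}$ is diagonal. Writing $\hat{\mathbf{r}}_m=r_m^{-1}(0,\,m\Delta_T,\,-D)^T$ and neglecting $\mathbf{\Psi}_m$, the retained columns of $\mathbf{P}_m$ (the first $t_\mathrm{pol}=2$ polarizations) produce a $\bar{\mathbf{P}}_m\bar{\mathbf{P}}_m^H$ whose only off-diagonal entries, in the $(2,3)$ and $(3,2)$ positions, are proportional to $m\Delta_T D^3/r_m^6$ after the $r_m^{-2}$ normalization; since $r_m$ depends on $m$ only through $m^2$, these are odd in $m$ and cancel when summed symmetrically over $m=-M,\dots,M$. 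This identifies the three diagonal entries as $(\eta/2\lambda)^2$ times $\beta_2$, $D^4\beta_3$, and $D^2\Delta_T^2\beta_4$, the last sum carrying the $m^2$ weight of (\ref{eq:beta_4}); these are exactly the eigenvalues to be approximated.

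Next, setting $\Delta_T=D/\omega$ and $\epsilon=M/\omega$, I would approximate each sum by the Riemann integral $\sum_{m=-M}^{M}f(m)\approx\int_{-M}^{M}f(x)\,dx$. The first eigenvalue needs only $\int_{-M}^{M}(\omega^2+x^2)^{-1}dx=(2/\omega)\arctan(M/\omega)$, which immediately yields $\lambda_1^{3\times2}$. The remaining two require the standard rational integrals $\int_{-M}^{M}(\omega^2+x^2)^{-2}dx$ and $\int_{-M}^{M}(\omega^2+x^2)^{-3}dx$, obtained from the reduction formula for $\int(a^2+x^2)^{-n}dx$; for $\beta_4$ I would additionally write $x^2=(\omega^2+x^2)-\omega^2$ to split $\int_{-M}^{M}x^2(\omega^2+x^2)^{-3}dx$ into the $n=2$ and $n=3$ integrals. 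Substituting $\omega=M/\epsilon$ then collapses every combination of $M$ and $\omega$ into a function of $\epsilon$ alone, with a single overall factor $M$ surviving.

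The main obstacle will be purely algebraic. After evaluating the $n=2,3$ integrals one is left, for both $\lambda_2^{3\times2}$ and $\lambda_3^{3\times2}$, with a sum of an $\arctan$ term and two rational terms in $\epsilon^2+1$, which must be consolidated into the compact forms $(3\epsilon^2+5)/(\epsilon^2+1)^2$ and $(\epsilon^2-1)/(\epsilon^2+1)^2$ respectively. I would verify these by placing the rational parts over the common denominator $(\epsilon^2+1)^2$; concretely $2(\epsilon^2+1)^{-2}+3(\epsilon^2+1)^{-1}=(3\epsilon^2+5)(\epsilon^2+1)^{-2}$ for $\lambda_2^{3\times2}$, and $(\epsilon^2+1)^{-1}-2(\epsilon^2+1)^{-2}=(\epsilon^2-1)(\epsilon^2+1)^{-2}$ for $\lambda_3^{3\times2}$. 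Exactness in the limit $M\to\infty$, $\Delta_T\to0$ with $M\Delta_T$ constant follows from the convergence of the Riemann sums, exactly as in Lemma~\ref{lm:tripole_tripole}.
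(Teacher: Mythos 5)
Your proposal is correct and takes essentially the same route as the paper: the paper's proof is simply ``similar to Lemma~\ref{lm:tripole_tripole}'', i.e.\ read the eigenvalues off the diagonal of $\mathbf{W}^{3\times2}$ and replace each partial sum by its Riemann integral over $(-M,M)$ with $\Delta_T=D/\omega$, $\epsilon=M/\omega$, which is exactly what you do (with the integrals and the algebraic consolidations all checking out). One added value of your write-up: by tracking the $m^2$ weight you implicitly correct what appears to be a typo in (\ref{eq:HH_ULA_tripole_dipole}), whose $(3,3)$ entry should be $D^2\Delta_T^2\beta_4$ (with $\beta_4$ as in (\ref{eq:beta_4})) rather than $D^2\Delta_T^2\beta_3$; only the $\beta_4$ version scales linearly in $M$ and reproduces the stated $\lambda_3^{3\times2}(\epsilon)$.
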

\begin{proof}
Similar to the \textit{proof} of Lemma \ref{lm:tripole_tripole}. 
\end{proof}

Finally, when all the transceivers use two dipoles oriented as the $x$ and $y$ axis $(t_\mathrm{pol},r_\mathrm{pol})=(2,2)$, the maximum of the rank of the channel is 2.

\begin{lemma} 
\label{lm:dipole_dipole}
Under the same conditions as Lemma \ref{lm:tripole_tripole} above, the eigenvalues of $\mathbf{W}^{2 \times 2}$ can be approximated as
\begin{IEEEeqnarray}{c}
\lambda _{1}^{2 \times 2} (\epsilon) = \left(\frac{\eta}{2\lambda}\right)^2\frac{2M}{ D^2}\frac{\arctan\left(\epsilon\right)}{\epsilon} \!\!\!\! \\
\lambda _{2}^{2 \times 2} (\epsilon)\!\! = \left(\frac{\eta}{2\lambda}\right)^2\frac{M}{ 4D^2}\left(\frac{3\arctan\left(\epsilon\right)}{\epsilon} +\frac{3\epsilon^2+5}{\left(\epsilon^2+1\right)^2}\right)\!\!
\end{IEEEeqnarray}
\end{lemma}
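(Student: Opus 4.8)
The plan is to avoid recomputing anything by observing that the $(t_\mathrm{pol},r_\mathrm{pol})=(2,2)$ configuration is obtained from the $(3,2)$ configuration of Lemma \ref{lm:tripole_dipole} by discarding the third \emph{receive} polarization. Dropping the third receive dipole deletes the third row of every $\bar{\mathbf{H}}_m$, and hence deletes the third row and column of the Gramian $\mathbf{W}=\sum_m\bar{\mathbf{H}}_m\bar{\mathbf{H}}_m^H$. Since $\mathbf{W}^{3\times2}$ in (\ref{eq:HH_ULA_tripole_dipole}) is diagonal, its leading $2\times2$ principal submatrix $\mathbf{W}^{2\times2}$ is again diagonal with the \emph{same} first two diagonal entries, $\left(\frac{\eta}{2\lambda}\right)^2\beta_2$ and $\left(\frac{\eta}{2\lambda}\right)^2 D^4\beta_3$. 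As the eigenvalues of a diagonal matrix are its diagonal entries, this gives immediately $\lambda_1^{2\times2}=\lambda_1^{3\times2}$ and $\lambda_2^{2\times2}=\lambda_2^{3\times2}$, so the two claimed expressions follow verbatim from Lemma \ref{lm:tripole_dipole}. This reduces the entire statement to an elementary remark about principal submatrices of a diagonal matrix.

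For a self-contained derivation not relying on Lemma \ref{lm:tripole_dipole}, I would instead compute $\mathbf{W}^{2\times2}$ from scratch. With the reactive term neglected ($\mathbf{\Psi}_m\approx\mathbf{0}$), $\mathbf{P}_m=\mathbf{I}-\hat{\mathbf{r}}_m\hat{\mathbf{r}}_m^H$ is an orthogonal projection; using $\hat{\mathbf{r}}_m=\frac{1}{r_m}(0,m\Delta_T,-D)^{T}$ one finds that its leading $2\times2$ block is the diagonal matrix $\bar{\mathbf{P}}_m=\mathrm{diag}(1,D^2/r_m^2)$, whence $\bar{\mathbf{P}}_m\bar{\mathbf{P}}_m^H=\mathrm{diag}(1,D^4/r_m^4)$. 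Summing $r_m^{-2}\bar{\mathbf{P}}_m\bar{\mathbf{P}}_m^H$ over $m$ then yields the diagonal Gramian with entries $\beta_2=\sum_m r_m^{-2}$ and $D^4\beta_3=D^4\sum_m r_m^{-6}$, in agreement with the first two coordinates of (\ref{eq:HH_ULA_tripole_dipole}).

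It remains to pass from the partial sums to the stated closed forms, exactly as in the proof of Lemma \ref{lm:tripole_tripole}. Setting $\Delta_T=D/\omega$ and letting $M\to\infty$, $\Delta_T\to0$ with $M\Delta_T$ fixed, I would replace $\beta_2$ and $\beta_3$ by definite integrals over $(-M,M)$ and substitute $u=m\Delta_T/D$, so that $\epsilon=M\Delta_T/D$ becomes the integration limit. The $\beta_2$ integral is elementary, $\int_{-\epsilon}^{\epsilon}(1+u^2)^{-1}\,du=2\arctan(\epsilon)$, and after the prefactor it reproduces the $\frac{2M}{D^2}\frac{\arctan(\epsilon)}{\epsilon}$ term of $\lambda_1^{2\times2}$, i.e. the leading eigenvalue already common to Lemmas \ref{lm:tripole_tripole} and \ref{lm:tripole_dipole}.

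The main obstacle is the $\beta_3$ integral $\int_{-\epsilon}^{\epsilon}(1+u^2)^{-3}\,du$, which I would evaluate through the standard reduction $\int(1+u^2)^{-3}du=\frac{u}{4(1+u^2)^2}+\frac{3u}{8(1+u^2)}+\frac{3}{8}\arctan(u)$. Inserting the limits (each term is odd, so the value doubles), multiplying by the prefactor $D^4/(\Delta_T D^5)=M/(\epsilon D^2)$, and collapsing the two rational pieces via $\frac{2}{(1+\epsilon^2)^2}+\frac{3}{1+\epsilon^2}=\frac{3\epsilon^2+5}{(1+\epsilon^2)^2}$ recovers exactly $\lambda_2^{2\times2}$. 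The only real care needed is bookkeeping the coefficients of the $\arctan$ and rational contributions and verifying that the rational part collapses to the single fraction quoted; everything else is routine. I would regard the submatrix argument of the first paragraph as the cleanest route and present the integral computation only as corroboration.
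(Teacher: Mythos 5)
Your proof is correct and takes essentially the same route as the paper: the paper's own proof likewise observes that $\mathbf{W}^{2\times 2}$ is diagonal with the same $(1,1)$ and $(2,2)$ entries as $\mathbf{W}^{3\times 2}$ in (\ref{eq:HH_ULA_tripole_dipole}), so the eigenvalues reduce to $\left(\frac{\eta}{2\lambda}\right)^2\beta_2$ and $\left(\frac{\eta}{2\lambda}\right)^2 D^4\beta_3$ and the closed forms follow from the Riemann-integral approximation. Your principal-submatrix remark and the explicit evaluation of $\int (1+u^2)^{-3}\,du$ simply make rigorous and fill in the computation that the paper compresses into one sentence by deferring to the proof of Lemma \ref{lm:tripole_tripole}.
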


\begin{proof}
It can be shown that Grammian matrix $\mathbf{W}$ in (\ref{eq:W_matrix}) becomes diagonal, having as diagonal elements the same that are presented in position (1,1) and (2,2) in (\ref{eq:HH_ULA_tripole_dipole}). Therefore, the eigenvalues depend on (\ref{eq:beta_2}) and (\ref{eq:beta_3}), and the result follows from the 
definition of Riemann integral.
\end{proof}

\section{Optimal ULA size}
This section establishes the fact that, for some polarization combinations, there exists an optimum size of the transmit ULA in terms of maximum achievable rate. Indeed, when the transmitter has perfect channel state information, 
the achievable rate can be described as
$$
C(\rho)=\sum_{n=1}^{3} \mathrm{log}_2\left( 1+ {\rho}\! \left[ \frac{1}{\gamma}\! - \!\frac{1}{\lambda_n} \right]^{+}\lambda_n \right)
$$
where $\gamma$ denotes the power allocation water-level, $[z]^+=\mathrm{min}(0,z)$, $\rho$ is the signal to noise ratio and $\lambda_n$ are the channel eigenvalues studied in Section \ref{sec:rank_channel} for each polarization configuration. The number of spatial degrees of freedom (DoF), or maximum number of spatial transmission modes with non-zero power allocation, is defined by 
\begin{IEEEeqnarray}{c}
\nu = \lim_{\rho\rightarrow\infty}\frac{\log 2^{C(\rho)}}{\log(\rho)} 
\label{eq:dof}
\end{IEEEeqnarray}

\begin{lemma}
\label{lm:Capacity_highSNR}
In a scenario with a ULA with $M_t=2M\!+\!1$ elements using polarizations $t_\mathrm{pol}\in\{2,3\}$, and one UE with polarization $r_\mathrm{pol}\in\{2,3\}$, the maximum achievable rate at high SNR can be approximated (in the sense of Lemma \ref{lm:tripole_tripole}) as
\begin{IEEEeqnarray}{c}
C=C_0(\rho,M,D) + \alpha\left(\frac{M\Delta_T^*}{D}\right)
\label{eq:Capacity_highSNR}
\end{IEEEeqnarray}
where
\begin{IEEEeqnarray}{c}
C_0(\rho,M,D)=
\begin{cases}
  3\log_2\left( \frac{\eta}{2\lambda}\right)^2\frac{M}{D^2}\frac{\rho}{3} &  (r_\mathrm{pol},t_\mathrm{pol})=(3\times3) \\
  3\log_2\left( \frac{\eta}{2\lambda}\right)^2\frac{M}{4D^2}\frac{\rho}{3}  &  (r_\mathrm{pol},t_\mathrm{pol})=(3\times2)\\
  2\log_2\left( \frac{\eta}{2\lambda}\right)^2\frac{M}{4D^2}\frac{\rho}{2} &  (r_\mathrm{pol},t_\mathrm{pol})=(2\times2)
  \end{cases} \nonumber
\end{IEEEeqnarray}
and
\begin{IEEEeqnarray}{c}
\alpha \left(\frac{M\Delta_T^*}{D}\right) =
\begin{cases}
  -0.7794 &  (r_\mathrm{pol},t_\mathrm{pol})=(3\times3) \\
  4.6339  &  (r_\mathrm{pol},t_\mathrm{pol})=(3\times2)\\
  \approx 0 &  (r_\mathrm{pol},t_\mathrm{pol})=(2\times2).
  \end{cases} \nonumber
\end{IEEEeqnarray}
This is achieved with the optimal $\Delta_T$, given by
\begin{IEEEeqnarray}{c}
\label{eq:op_sep_antenna}
\Delta_T^* =
    \begin{cases}
      0.9058\frac{D}{M} &  (r_\mathrm{pol},t_\mathrm{pol})=(3\times3) \\
      0.7144\frac{D}{M} & (r_\mathrm{pol},t_\mathrm{pol})=(3\times2)\\
       \approx 0 & (r_\mathrm{pol},t_\mathrm{pol})=(2\times2).
    \end{cases}     
\end{IEEEeqnarray}

\end{lemma}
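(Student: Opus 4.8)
The plan is to start from the water-filling rate $C(\rho)$ and pass to the high-SNR regime, where the optimal allocation becomes asymptotically uniform across the active eigenmodes. First I would fix the number of degrees of freedom $\nu$ for each configuration: by Lemmas~\ref{lm:tripole_tripole}--\ref{lm:dipole_dipole} the channel has three strictly positive eigenvalues in the $(3\times3)$ and $(3\times2)$ cases and two in the $(2\times2)$ case, so $\nu=3$ and $\nu=2$ respectively. For $\rho$ large enough the water level $1/\gamma$ sits above every $1/\lambda_n$, so all $\nu$ modes are active and each carries (asymptotically) an equal share $1/\nu$ of the power budget. Substituting this back, every summand collapses, $\log_2(1+\rho\,p_n\lambda_n)=\log_2(\rho\,\lambda_n/\nu)+o(1)$, giving the standard trace-log surrogate $C \approx \nu\log_2(\rho/\nu) + \sum_{n=1}^{\nu}\log_2\lambda_n$.

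The second step is to insert the eigenvalue approximations of Lemmas~\ref{lm:tripole_tripole}--\ref{lm:dipole_dipole}. Each $\lambda_n$ factors as a common scale $\left(\frac{\eta}{2\lambda}\right)^2\frac{M}{D^2}$ (or $\frac{M}{4D^2}$) times a dimensionless profile $g_n(\epsilon)$ that depends only on $\epsilon=M\Delta_T/D$. Pulling the scale out of each logarithm splits the rate into an $\epsilon$-independent part $C_0(\rho,M,D)$ — collecting the $\rho$, $M$, $D$ dependence together with the $\nu\log_2(\rho/\nu)$ term — and an $\epsilon$-dependent correction $\alpha(\epsilon)=\sum_{n=1}^{\nu}\log_2 g_n(\epsilon)$, i.e.\ the logarithm of the product of the normalized eigenvalue profiles. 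This reproduces the claimed decomposition (\ref{eq:Capacity_highSNR}), with the three expressions for $C_0$ following directly from the prefactors in the three eigenvalue lemmas.

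The final step is to maximize $C$ over $\Delta_T$, which, since $C_0$ is independent of $\epsilon$, reduces to maximizing $\alpha(\epsilon)$. I would set $\alpha'(\epsilon)=0$ and analyze the stationarity condition. For the $(3\times3)$ and $(3\times2)$ cases the smallest profile $g_3$ vanishes as $\epsilon\to0$, so $\alpha\to-\infty$ there, and all profiles decay to $0$ as $\epsilon\to\infty$, so $\alpha\to-\infty$ again; hence a unique interior maximizer $\epsilon^\star$ exists. Solving the transcendental equation numerically yields $\epsilon^\star=0.9058$ and $\epsilon^\star=0.7144$, giving $\Delta_T^\star=\epsilon^\star D/M$ as in (\ref{eq:op_sep_antenna}) and $\alpha(\epsilon^\star)=-0.7794$ and $4.6339$. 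For the $(2\times2)$ case both surviving profiles stay finite at $\epsilon=0$ and $\alpha$ is monotonically decreasing (the derivatives of $\arctan(\epsilon)/\epsilon$ and of the rational term are both negative near the origin), so the supremum is attained at the boundary $\epsilon\to0$, i.e.\ $\Delta_T^\star\to0$.

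The main obstacle I anticipate is this optimization step: the stationarity condition mixes $\arctan(\epsilon)/\epsilon$ with rational functions of $\epsilon$ and has no closed form, so the optimizers must be located numerically, and one must still certify that the critical point is a genuine maximum rather than an inflection — which I would do by checking the sign of $\alpha'$ on either side of $\epsilon^\star$, and by the monotonicity argument in the $(2\times2)$ case. A secondary technical point is rigorously justifying the uniform-power surrogate, namely controlling the $o(1)$ gap between the exact water-filling rate and $\nu\log_2(\rho/\nu)+\sum_n\log_2\lambda_n$ uniformly in $\epsilon$ over the relevant range, so that maximizing the surrogate indeed identifies the rate-optimal spacing at high SNR.
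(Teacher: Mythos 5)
Your proposal is correct and follows essentially the same route as the paper's proof: high-SNR water-filling collapses to uniform power over the active modes, the eigenvalue approximations of Lemmas \ref{lm:tripole_tripole}--\ref{lm:dipole_dipole} factor into a common scale (absorbed into $C_0$) times dimensionless profiles in $\epsilon=M\Delta_T/D$ whose log-product is exactly the paper's $\alpha^{r_\mathrm{pol}\times t_\mathrm{pol}}(\epsilon)$ in (\ref{eq:alpha_3x3})--(\ref{eq:alpha_2x2}), and the maximization over $\epsilon$ is carried out numerically. Your additional observations (vanishing of the third profile at $\epsilon\to 0$ forcing an interior maximizer in the $(3\times3)$ and $(3\times2)$ cases, and monotone decrease of $\alpha^{2\times2}$ giving $\Delta_T^*\to 0$) are correct and merely make explicit what the paper leaves to Fig.~\ref{fig:optimal_alpha}.
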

\begin{proof}
By using the expression of the approximated eigenvalues one readily sees that the achievable rate at high SNR follows the expression in (\ref{eq:Capacity_highSNR}). The second term, $\alpha$, depends on the polarization configuration $(r_\mathrm{pol},t_\mathrm{pol})$ and $\epsilon=\frac{M\Delta_T}{D}$ as 
\begin{IEEEeqnarray}{c}
\alpha^{3\times3}(\epsilon)\! = \!\log_2\left(\!2\zeta\left(\!\!\zeta+\frac{1}{\epsilon^2+1}\!\!\right)\!\left(\!\!\zeta-\frac{1}{\epsilon^2+1}\!\!\right)\!\!\right)\!\!
\label{eq:alpha_3x3}
\end{IEEEeqnarray}
\begin{IEEEeqnarray}{c}
\!\!\!\alpha^{3\times2}(\epsilon)\! = \!\log_2\left(\!{8}\zeta\left(\!\!3\zeta+\frac{3\epsilon^2+5}{(\epsilon^2+1)^2}\!\!\right)\!\left(\!\!\zeta+\frac{\epsilon^2-1}{(\epsilon^2+1)^2}\!\!\right)\!\!\right)\!\!
\label{eq:alpha_3x2}
\end{IEEEeqnarray}
\begin{IEEEeqnarray}{c}
\alpha^{2\times2}(\epsilon)\! = \! \log_2\left(\!{8}\zeta\left(\!\!3\zeta+\frac{3\epsilon^2+5}{(\epsilon^2+1)^2}\!\!\right)\!\!\right)\!\!
\label{eq:alpha_2x2}
\end{IEEEeqnarray}
where $ \zeta \!\!= \!\!\frac{\arctan\left(\epsilon\right)}{\epsilon}$. Since the maximum only depends on variable $\epsilon$, the optimization can be done numerically. \end{proof}
Fig. \ref{fig:optimal_alpha} presents $\alpha^{r_\mathrm{pol}\times t_\mathrm{pol}}$ as a function of $\epsilon$. When $(r_\mathrm{pol},t_\mathrm{pol})\in \{(3,3), (3,2)\}$, $\alpha$ has a maximum for certain $\epsilon$, while for $(r_\mathrm{pol},t_\mathrm{pol})=(2,2)$, $\alpha^{2\times2}$ is a monotonically decreasing function, implying that the achievable rate under configuration $(r_\mathrm{pol},t_\mathrm{pol})\!=\!(2,2)$ always decreases if the antenna separation increases.
\begin{figure}[htbp]
\centerline{\includegraphics[width=3.1in]{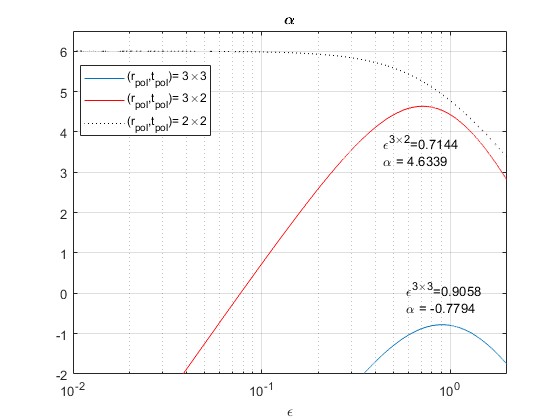}}
\caption{ Evaluation of function $\alpha$ presented in (\ref{eq:alpha_3x3}),(\ref{eq:alpha_3x2}),(\ref{eq:alpha_2x2})}
\label{fig:optimal_alpha}
\end{figure}
Lemma \ref{lm:Capacity_highSNR} highlights the importance of using three orthogonal dipoles to increase the number of DoF  in the \textit{near-field} regime ($\nu^{3\times3}=\nu^{3\times2}=3$) with respect the case of having just two orthogonal dipole antennas ($\nu^{2\times2}=2$). The following lemma shows that, when the UE is far away from the antenna array and the inter-element separation is fixed (in general, $\Delta_T \neq \Delta_T^*$), the use of three orthogonal polarization does not bring additional spatial degrees of freedom. The proof follows from direct evaluation of the expressions given in Lemmas \ref{lm:tripole_tripole},\ref{lm:tripole_dipole} and \ref{lm:dipole_dipole} after assuming assuming $D \gg \Delta_TM$ and $D \gg 0$. 
\begin{lemma}
\label{lm:eigenvalues_FF}
As $D \rightarrow \infty$ the the rank of the channel matrix with dimensions $r_\mathrm{pol}\times t_\mathrm{pol}M_t$ converges to 2, so that $\nu^{3\times3},\nu^{3\times2}$ and $\nu^{2 \times 2}$ also converge to $2$.
\end{lemma}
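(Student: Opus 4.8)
The plan is to follow the direct-evaluation route announced in the text: fixing the array geometry (i.e.\ the element count through $M$ and the spacing $\Delta_T$) and letting $D\to\infty$ is equivalent to sending the single scalar parameter $\epsilon=\frac{M\Delta_T}{D}$ to $0$. I would therefore study the $\epsilon\to0^{+}$ behaviour of each eigenvalue expression in Lemmas~\ref{lm:tripole_tripole}--\ref{lm:dipole_dipole}, the key analytic inputs being the Taylor expansions $\zeta=\frac{\arctan(\epsilon)}{\epsilon}=1-\frac{\epsilon^2}{3}+O(\epsilon^4)$, $\frac{1}{\epsilon^2+1}=1-\epsilon^2+O(\epsilon^4)$ and $\frac{\epsilon^2-1}{(\epsilon^2+1)^2}=-1+3\epsilon^2+O(\epsilon^4)$.

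Carrying this out case by case: in the tripole--tripole configuration the expansions give $\lambda_1^{3\times3},\lambda_2^{3\times3}\to\left(\frac{\eta}{2\lambda}\right)^2\frac{2M}{D^2}$, a common strictly positive leading value, whereas the difference structure of the third eigenvalue produces a cancellation, $\lambda_3^{3\times3}=\left(\frac{\eta}{2\lambda}\right)^2\frac{M}{D^2}\bigl(\zeta-\frac{1}{\epsilon^2+1}\bigr)=\left(\frac{\eta}{2\lambda}\right)^2\frac{M}{D^2}\frac{2\epsilon^2}{3}+O(\epsilon^4)$, so that $\lambda_3^{3\times3}/\lambda_1^{3\times3}\to\epsilon^2/3\to0$. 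The same mechanism appears in the tripole--dipole case: substituting the expansions into the bracket of $\lambda_3^{3\times2}$ shows it equals $\frac{8}{3}\epsilon^2+O(\epsilon^4)$ and hence vanishes, while $\lambda_1^{3\times2}$ and $\lambda_2^{3\times2}$ both tend to the positive constant $\left(\frac{\eta}{2\lambda}\right)^2\frac{2M}{D^2}$. In the dipole--dipole case there are only two eigenvalues, both converging to strictly positive limits. Thus in every configuration exactly two eigenvalues survive and the third (when present) is asymptotically negligible, so the Gramian $\mathbf{W}$ becomes rank~$2$ in the limit and $\nu^{3\times3},\nu^{3\times2},\nu^{2\times2}\to2$.

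As a sanity check on the algebra I would also record the physical mechanism: as $D\to\infty$ the unit vector $\hat{\mathbf{r}}_m\to(0,0,-1)^{T}$ for every $m$ and the reactive term $\mathbf{\Psi}_m\to\mathbf{0}$, so each $\mathbf{P}_m\to\mathbf{I}-\hat{\mathbf{r}}_m\hat{\mathbf{r}}_m^{H}\to\mathrm{diag}(1,1,0)$; the radiated field becomes purely transverse and the longitudinal ($z$-directed) polarization radiates nothing, which forces the common rank-$2$ structure. The delicate point, and the one I would be most careful about, is the order of limits: for any finite $D$ the third eigenvalue is strictly positive, so if $\rho\to\infty$ is taken first the definition (\ref{eq:dof}) literally returns $\nu=3$. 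The statement must therefore be read as a far-field limit in which $\lambda_3$ vanishes relative to $\lambda_1,\lambda_2$ at rate $O(\epsilon^2)=O(D^{-2})$; I would make this explicit by noting that water-filling activates the third stream only at SNRs that grow without bound as $D\to\infty$, so that at any fixed operating point the third mode carries a vanishing fraction of the rate and the effective number of usable modes is $2$.
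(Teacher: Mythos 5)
Your proposal is correct and follows exactly the route the paper itself takes: the paper's proof is simply the remark that the result follows ``from direct evaluation of the expressions given in Lemmas \ref{lm:tripole_tripole}, \ref{lm:tripole_dipole} and \ref{lm:dipole_dipole}'' under $D \gg \Delta_T M$, which is precisely your $\epsilon \to 0^{+}$ Taylor expansion showing $\lambda_3 = O(\epsilon^2)\,\lambda_1$ while $\lambda_1, \lambda_2$ remain comparable. Your additional observations --- the physical transversality argument and, especially, the order-of-limits caveat that $\nu$ as defined in (\ref{eq:dof}) would literally equal $3$ for any finite $D$ --- go beyond the paper's one-line proof and make the statement's intended reading precise.
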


%
%

\section{Numerical validation}
We consider the scenario presented in Fig. \ref{fig:Scenario} with a carrier frequency of $f=3$ GHz ($\lambda=0.1$) and $\eta=1$. As before, we consider three different configurations regarding polarizations, which will be denoted as $(r_\mathrm{pol} \!\times\! t_\mathrm{pol})=(3,3)$, $(r_\mathrm{pol} \!\times\! t_\mathrm{pol})=(3,2)$ and $(r_\mathrm{pol} \!\times\! t_\mathrm{pol})=(2,2)$ respectively. 

\subsection{Eigenvalues of the channel matrix}\label{sec:res_eigenvalues}
Fig.~\ref{fig:eigenvalues_plot} illustrates the eigenvalues of matrix $\mathbf{H}_{eq}\mathbf{H}_{eq}^H$ for the different polarization configurations. We compare the results obtained by simulation (solid lines) and the ones obtained with the expressions derived in Lemma \ref{lm:tripole_tripole}, Lemma  \ref{lm:tripole_dipole}  and Lemma \ref{lm:dipole_dipole} (dotted lines) when the UE is at $D=5$ and there are $M_t \in \{7,31\}$ antenna elements. Results are obtained as a function on the antenna separation $\Delta_T$. We verify that the analytical solutions tends to the simulated one when we increase $M$, which agrees with the asymptotic regime considered to derive these expressions. Additionally, as the analytical expressions suggest, we see that: (i) the first eigenvalue of the channel for all the configurations is the same ($\lambda_1^{3\times3}=\lambda_1^{3\times2}=\lambda_1^{2\times2}$); (ii) the second eigenvalues of the configurations $(2,2)$ and $(2,3)$ are equal ($\lambda_2^{3\times2}=\lambda_2^{2\times2}$) but smaller than the first eigenvalue with configuration $(3,3)$ ($\lambda_2^{3\times3} \geq \lambda_2^{3\times2}$); and (iii) both the first and second channel eigenvalues are decreasing functions of $\Delta_T$, although the third non-zero eigenvalue in configurations $\{(3,3)$ and $(3,2)\}$ presents a non-trivial maximum with respect to $\Delta_T$. 
\begin{figure}[htbp]
\centerline{\includegraphics[width=3.65in]{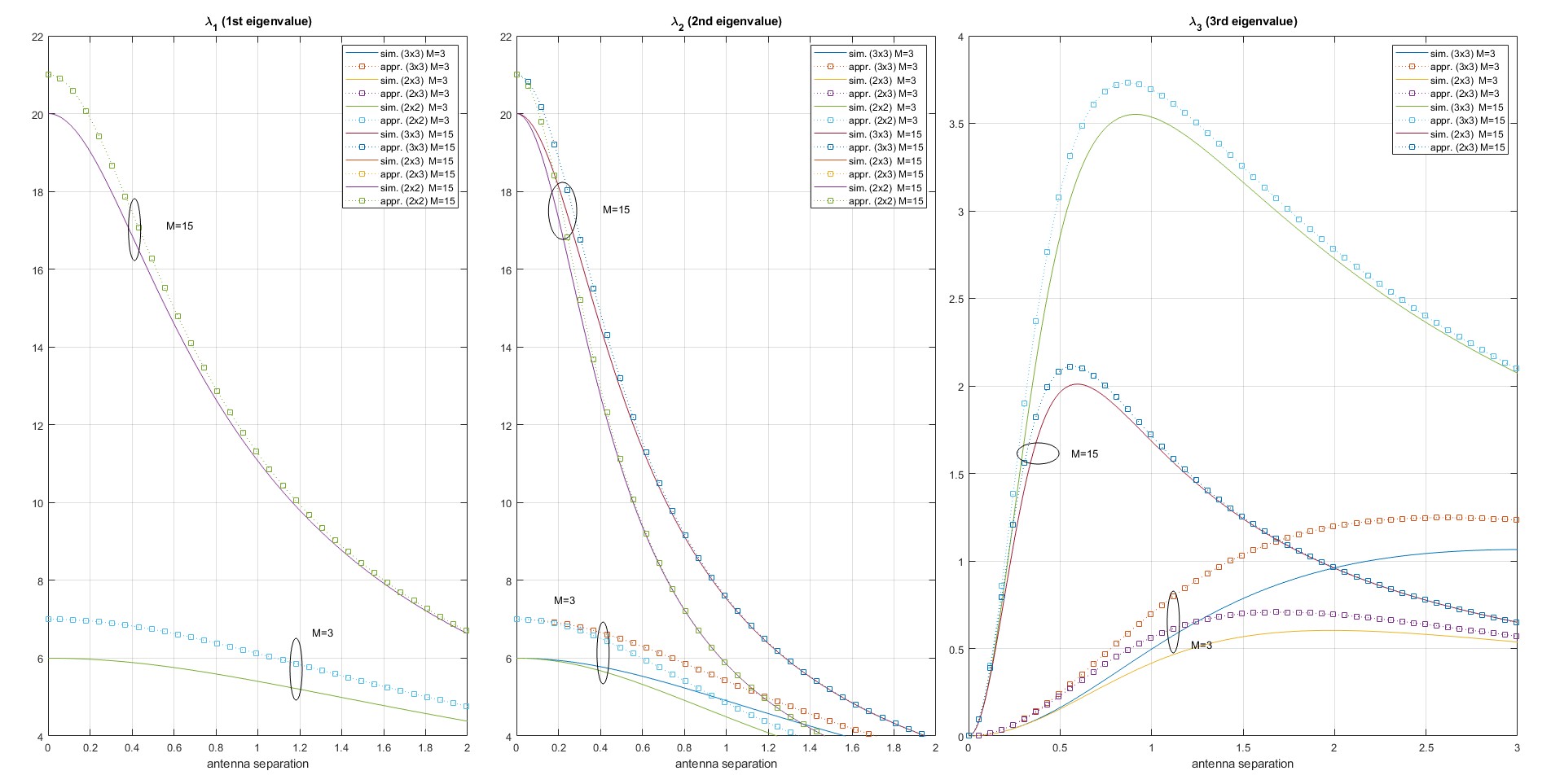}}
\caption{1st eigenvalue (left), 2nd eigenvalue (center) and 3rd eigenvalue (right) obtained by simulation and analytically when the UE is at $D=5$ and $M_t \in {(7,31)}$ ($M \in {(3,15)}$).}
\label{fig:eigenvalues_plot}
\end{figure}

\subsection{Achievable rate}\label{sec:res_capacity}
Fig. \ref{fig:Capacity_vs_SNR} presents the maximum achievable rate for a given SNR ($\rho$) under multiple polarization configurations when there are $M_t=41$ antenna elements and UE is at $D=5$. The maximum achievable rate is obtained by optimizing the antenna spacing $\Delta_T$. Configurations $(3,3),(2,2)$ show the largest gains in terms of achievable rate. Furthermore, at high SNR, the slope of the achievable rate, measured according to (\ref{eq:dof}), attains $\nu\!=\!3$. Likewise, we can observe that the approximations provided in Lemma \ref{lm:Capacity_highSNR} for the achievable rate at high SNR are very close to the simulated achievable rate when $\rho\! \geq\! 5$ dB. On the other hand, Fig. \ref{fig:ULAsize} depicts the ULA size where the achievable rate is maximum as a function of the SNR and the UE is at $D=\{3,5\}$. Notice that the optimal ULA sizes coincide with the values predicted by Lemma \ref{lm:Capacity_highSNR}, i.e. $L_{\text{ULA}}^{3\times3}\!=\!2M\Delta_T^*\!\!=\!\!2\! \times\! 0.9058 D$ and $L_{\text{ULA}}^{3\times2}\!\!=\!\!2 \!\times\! 0.7144 D$. Interestingly enough, the obtained achievable rate with configuration $(3,2)$ is close to the maximum one $(3,3)$, but requiring an smaller ULA size with only three orthogonal dipoles at the UE side.

\begin{figure}[htbp]
\centerline{\includegraphics[width=3.2in]{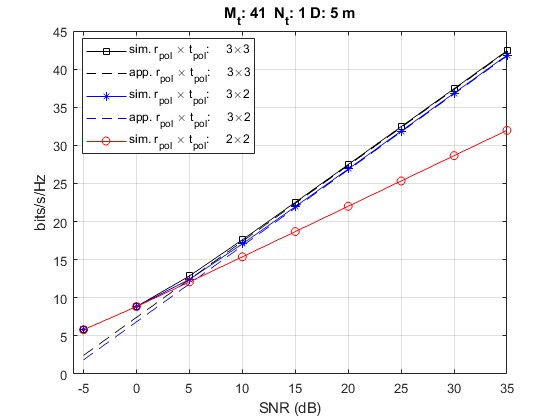  }}
\caption{Optimized achievable rate (simulated and using approximation of Lemma \ref{lm:Capacity_highSNR}) as a function of SNR (dB) with respect antenna separation ($\Delta_T$). UE at $D=5$. $M_t=41$. }
\label{fig:Capacity_vs_SNR}
\end{figure}

\begin{figure}[htbp]
\centerline{\includegraphics[width=3.2in]{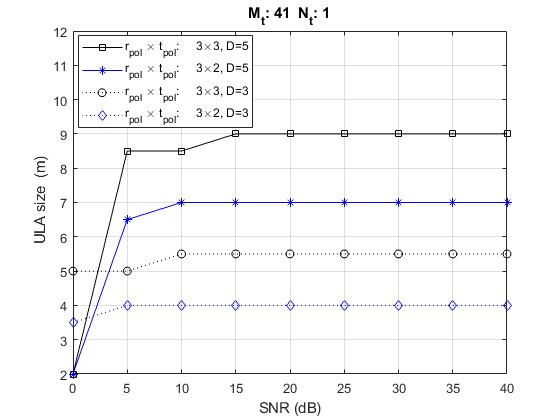}}
\caption{Optimal size of the ULA (numerical optimization of $\Delta_T^*$) for a UE with SNR $\rho=20$ dB, $M_t=41$. The predicted size is given by Lemma \ref{lm:Capacity_highSNR}: $L_{\text{ULA}}^{3\times3}=2\! \times\! 0.9058 D$ and $L_{\text{ULA}}^{3\times2}=2 \!\times\! 0.7144 D$. }
\label{fig:ULAsize}
\end{figure}



\subsection{Near-field beam-focusing} \label{res:sec_beamfocusing}
In this section, we would like to study the \textit{near-field} beamfocusing capabilities of the ULA, which can be measured as the effect of a precoding matrix $\mathbf{F}$, which is originally designed for a certain UE location, on the signal that could potentially be received in other spatial positions. 
In order to evaluate this effect, we assume that we design the transmit filter $\mathbf{F}$ for a terminal placed at $D=5$, and we evaluate the achievable rate that could be obtained by a terminal placed in positions along the \textit{z}-axis $(0,0,D)$. If the polarized ULA is able to focus the transmission, we would expect that a UE, for example at $D=10$, would not be interfered by the signal intended to the UE at $D=5$, which means that it would measure a very low achievable rate. In this regard, Fig.~\ref{fig:focusing_2x2}-(top) represents the achievable rate obtained when having  $2M+1$ antenna elements at the ULA with a polarization configuration $(2\times2)$, total antenna size of $L_{\textit{ULA}}^{2\times2}=0.5$ m, and a transmit precoder designed for a UE at $D=5$. For this configuration, we could increase the antenna size, see Fig.~\ref{fig:focusing_2x2}-(bottom), and results show that the signal can be slightly focused at the position of the UE. This is a consequence of having a large ULA, operating in the \textit{near-field} region in combination with a precoder that takes into account the spherical wave propagation pheomenon. Observe, in any case, that this comes at the cost of reducing the maximum achievable rate at the point of interest. Recall from Lemma \ref{lm:Capacity_highSNR} that for the $(2\times2)$ polarization configuration the optimum rate is obtained as $\Delta_T \rightarrow 0$.

\begin{figure}[htbp]
\centerline{\includegraphics[trim=0 15 0 15, clip=1,width=3.6in]{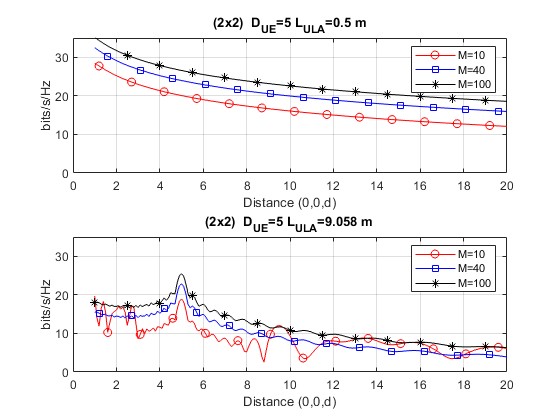}}
\caption{Evaluation of the achievable rate as a function of the distance $(0,0,d)$ when transmit precoders are selected for a UE placed at $(0,0,5)$. SNR $\rho=20$ dB. Top: configuration $(2\times2)$ with $L_{\text{ULA}}$=0.5 m,  Bottom: configuration $(2\times2)$ with $L_{\text{ULA}}$=9.058 m}
\label{fig:focusing_2x2}
\end{figure}

On the other hand, Fig.~\ref{fig:focusing_3x3_3x2} represents the achievable rate obtained when three orthogonal dipoles are considered, which allows increasing the number of spatial transmission modes from $\nu=2$ to $\nu=3$. Configurations $(3\times3)$ and $(2\times2)$ are considered in Fig.~\ref{fig:focusing_3x3_3x2}-(top) and Fig.~\ref{fig:focusing_3x3_3x2}-(bottom), respectively. In both cases, we have set the antenna spacing according to Lemma \ref{lm:Capacity_highSNR}, leading to different ULA sizes. It can be observed that the maximum achievable rate of each configuration is very similar, but the region where the signal is focused is slightly different. For example with $M=100$, the region where the achievable rate is at 10 bits/s/Hz below the maximum in configuration $(3\times3)$ is in the region $d \in (4.6, 5.35)$, while for the $(3\times2)$ configuration is $d\in (4.55, 5.45)$.

\begin{figure}[htbp]
\centerline{\includegraphics[trim=0 15 0 15, clip=1,width=3.6in]{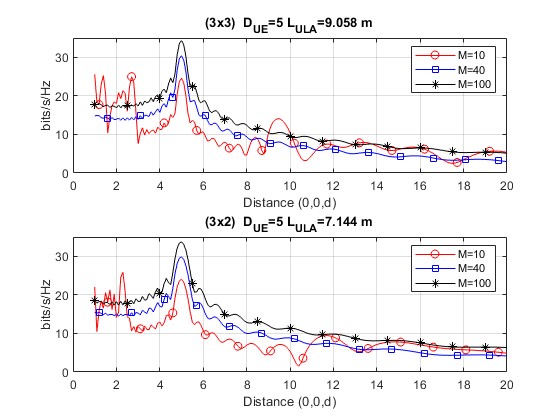}}
\caption{Evaluation of the achievable rate as a function of the distance $(0,0,d)$ when transmit precoders are selected for a UE placed at $(0,0,5)$. SNR $\rho=20$ dB. Top: configuration $(3\times3)$ with $L_{\text{ULA}}=9.058$ m,  Bottom: configuration $(3 \times 2)$ with $L_{\text{ULA}}$=7.144 m.}
\label{fig:focusing_3x3_3x2}
\end{figure}

We have optimized the ULA size in order to maximize the achievable rate at a given distance and SNR, and in fact the transmission can be focused on a certain region, but still the unintended UEs at $d \neq D_{\text{UE}}$ are receiving a important interference component. Making good use of the \textit{peak} in terms of achievable rate, we could reduce the transmitting power to reduce the interference signal at the cost of reducing the achievable rate at the UE's position. Fig. \ref{fig:focusing_all_SNR10} depicts the achievable rate at SNR, $\rho=10$ dB, illustrating that the signal is more concentrated around the UE's position, while reducing the generated interference (i.e. low achievable rate at positions $d\neq D$). The configuration with three orthogonal polarizations is able to improve the achievable rate using 2-dipole antennas by a factor $27\%$.

\begin{figure}[htbp]
\centerline{\includegraphics[width=3.5in]{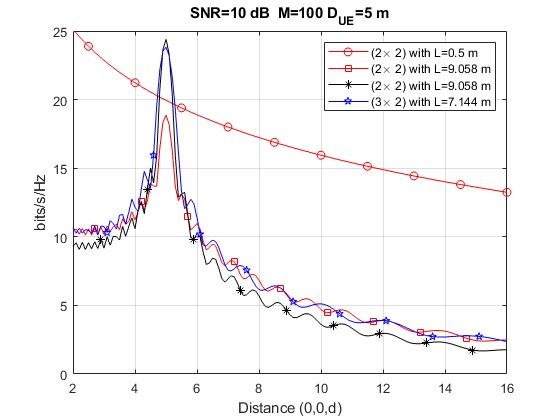}}
\caption{Evaluation of the achievable rate as a function of the distance $(0,0,d)$ when precoder is designed for a UE placed at $D\!=\!5$. SNR $\rho\!=\!10$ dB.}
\label{fig:focusing_all_SNR10}
\end{figure}

\section{Conclusions}
This work has investigated the circumstances under which a transmitter consisting of a single ULA equipped with $2M+1$ antennas is able to focus the signal into a given region. Each antenna element can have up to three orthogonal dipoles. When using two polarizations at both transmitter and receiver sides, we can focus the transmitted signal by increasing the ULA size at the cost of reducing the maximum achievable rate. However, when three orthogonal polarizations are used at least at the receiver side, it is possible to increase the number of spatial transmission modes up to $\nu=3$ and the achievable rate (see Lemma \ref{lm:Capacity_highSNR}) while still spatially focus the transmission. In this case the ULA size tends to $L^{3\times 2}=2\times0.9058\times D$ and $L^{3\times 2}=2\times0.7144\times D$ if the antenna elements at the ULA are respectively using three and two polarizations. 


\bibliographystyle{IEEEtran}
\bibliography{./biblio}



\end{document}